\documentclass[copyright,creativecommons]{eptcs}
\providecommand{\event}{GaSCOM 2026} 

\usepackage{amsmath,amssymb,amsthm,mathtools,enumitem}

\newcommand{\Des}{\mathrm{Des}}

\usepackage{diagbox}
\usepackage{makecell}

\usepackage{amssymb,latexsym,cite,longtable,url,array,booktabs}
\usepackage[linguistics]{forest}
\usepackage{multicol}
\usepackage{arydshln,pstricks,pst-node,ytableau}
\usepackage{multicol,float,vcell}
\usepackage{lipsum}
\usepackage{comment}
\usepackage{tabularx,booktabs}
\usepackage{tikz}
\usetikzlibrary{positioning}
\usepackage{hyperref}

\newcolumntype{Z}[0]{>{\centering\arraybackslash}X}
\newcolumntype{n}[0]{>{\hsize=.8\hsize}Z}
\newcolumntype{s}[0]{>{\hsize=.5\hsize}Z}

\makeatletter
\pgfkeys{/ytableau/options,
 onlyboxsize/.value required,
 onlyboxsize/.code = {%
  \compare@YT{#1}{normal}%
  \ifeq@YT
   \xdef\macro@boxdim@YT{\expandonce@YT\boxdim@normal@YT}%
  \else
   \xdef\macro@boxdim@YT{#1}%
  \fi
 }
}
\makeatletter

\newenvironment{smallytableau}{%
  \ytableausetup{smalltableaux,onlyboxsize=1em}%
  \begin{ytableau}%
}{%
  \end{ytableau}%
  \ytableausetup{nosmalltableaux,boxsize=normal}%
}
\usepackage{makecell}

\usepackage{tikz}
\usepackage{xparse}

\makeatletter
\newcommand*{\rom}[1]{\expandafter\@slowromancap\romannumeral #1@}
\makeatother

\newcount\tableauRow
\newcount\tableauCol

\newenvironment{Tableau}[1]{%
  \tikzpicture[scale=0.7,draw/.append style={thick,black},
                      baseline=(current bounding box.center)]
    \tableauRow=-1.5
    \foreach \Row in {#1} {
       \tableauCol=0.6
       \foreach\k in \Row {
         \draw[thin](\the\tableauCol,\the\tableauRow)rectangle++(1,1);
         \draw[thin](\the\tableauCol,\the\tableauRow)+(0.5,0.5)node{$\k$};
         \global\advance\tableauCol by 1
       }
       \global\advance\tableauRow by -1
    }
}{\endtikzpicture}

\makeatletter
\renewcommand{\paragraph}{\@startsection{paragraph}{4}{0pt}%
   {-3.25ex plus -1ex minus -0.2ex}%
   {1.5ex plus 0.2ex}%
   {\normalfont\normalsize\bfseries}}
\makeatother

\stepcounter{secnumdepth}
\stepcounter{tocdepth}

\usepackage{microtype}

\usepackage{nicematrix}
\NiceMatrixOptions
{
 custom-line =
 {
letter = ; ,
command = \hdashedline ,
ccommand = \cdashedline ,
tikz = { dash pattern = on 4 pt off 2pt } ,
total-width = \pgflinewidth
 }
}

\newcommand{\BBB}{{\mathcal{B}}}

\newtheorem{theorem}{Theorem}[section]
\newtheorem{definition}[theorem]{Definition}
\newtheorem{thm}[theorem]{Theorem}

\newtheorem{example}[theorem]{Example}
\newtheorem{cor}[theorem]{Corollary}

\newtheorem{remark}[theorem]{Remark}

\providecommand{\urlalt}[2]{\href{#1}{#2}}
\providecommand{\doi}[1]{doi:\urlalt{http://dx.doi.org/#1}{#1}}

\newcommand{\Q}{{\mathcal Q}}
\newcommand{\F}{\mathcal{F}}

\title{Touchard-Riordan Polynomials and Schur-positivity of Set Partitions}
\author{Eli Bagno and David Garber}

\author{
Eli Bagno
\institute{Jerusalem College of Technology, 21 HaVaad HaLeumi St., Jerusalem, Israel,\\ and  Michlalah College Jerusalem, 36 Barukh Duvdevani St., Jerusalem, Israel}
\email{bagnoe@g.jct.ac.il}
\and
David Garber
\institute{Holon Institute of Technology, 52 Golomb St., P.O.Box 305, 5810201 Holon, Israel}
\email{garber@hit.ac.il}
}

\def\titlerunning{Touchard-Riordan Polynomials and Schur-positivity of Set Partitions}
\def\authorrunning{Eli Bagno and David Garber}

\begin{document}

\maketitle

\begin{abstract}
A symmetric function is called {\it Schur-positive} if it admits an expansion in the Schur basis with nonnegative coefficients. In this paper, we study the Schur-positivity of symmetric functions naturally associated with set partitions, with respect to a descent set function that considers $i$ as  descent, if $i$ and $i+1$ share a block in the partition.
The Schur expansion involves hook-shaped Young diagrams, and the corresponding coefficients are given by Touchard-Riordan polynomials, which enumerate matchings by their number of crossings.
\end{abstract}

\section{Introduction}

Schur-positivity is a central notion in algebraic combinatorics, referring to symmetric functions that can be expressed as a nonnegative linear combination of Schur functions. Since Schur functions form a distinguished basis for the ring of symmetric functions and correspond to irreducible representations of the symmetric group, Schur-positivity often reflects deep algebraic and representation-theoretic structure.
Determining whether a given symmetric function is
Schur-positive is a major problem in contemporary algebraic
combinatorics~\cite{Stanley_problems}.

Given $D \subseteq \{1,\dots,n-1\}$, define $\mathcal{F}_D$ to be the fundamental quasi-symmetric function indexed by $D$ (see Definition \ref{def fund} below).
Given a combinatorial object $\sigma$, one is interested in studying the generating function $\Q_n=\sum \mathcal{F}_{{\rm Des}(\sigma)}$, where ${\rm Des}(\sigma)$ is a descent set function from the set of the objects $\sigma$ is taken from, to the set $\mathbb{N}$.
We look for functions $\Q_n$ which are symmetric or Schur-positive for each $n$.
When the objects under discussion are permutations, this subject has been developed extensively.
Unlike the permutation case, there is still no comparably general framework explaining Schur-positivity for broad families of set partitions, except for some initial results, see \cite{BGMS}.

Set partitions can be encoded by restricted growth words, and this encoding has been used extensively in the literature on statistics of set partitions and their generating functions, see e.g. \cite{CaRe} and \cite[Sec. 1.7]{EgGa}.  This makes the study of Schur-positivity for set partitions a natural extension of the classical permutation setting: one seeks structural conditions on families of set partitions, or on their restricted-growth words, that force the associated quasi-symmetric generating functions to be symmetric and Schur-positive.

\medskip

Adin and Roichman \cite{AR} dealt with the descent-type parameter on matchings (i.e. set partitions having blocks of size at most $2$) which they called {\it geometric descent}: an index $i \in [n-1]$ is a {\it geometric descent} of a matching $M$ on $[n]$ if
one of the following conditions holds:
\begin{enumerate}
\item  $\{i, i + 1\}$ is a matched pair in $M$.
\item The arc containing $i$ intersects the arc containing $i + 1$.
\item $i$ is unmatched and $i + 1$ is matched.
\end{enumerate}

Moreover, Marmor \cite{Marmor} called a matched pair $\{i,i+1\}$  in a matching {\it a short chord} and proved that the set of matchings of the set $[n]$ is Schur-positive with respect to the descent function that takes into account only short chords.

\medskip

Motivated by these works,  we define a descent parameter on set partitions, denoted ${\rm Des}$, which counts the number of elements $i$, such that $i$ and $i+1$ share the same block.

We use one of the existing definitions of a crossing number of set partitions that can be found in the literature (namely the one defined in Definition \ref{def:crossing of set partition} below) to extrapolate between Schur-positivity on the entire set of set partitions of $[n]$, denoted ${\mathbf S}et(n)$, and the set of non-crossing set partitions, denoted $\mathcal{NC}(n)$. The former set uses the Touchard-Riordan polynomials as coefficients of the linear combination into Schur functions, while the later uses the Motzkin triangle for this very purpose.

It is worth noting that we have many more results on Schur-positivity of various subsets of set partitions of $[n]$, with respect to another type of descent function. They are gathered in a paper, which is in final steps of preparation.

\medskip

This paper is organized as follows.
In Section \ref{Background}, we provide a background on Schur-positivity, crossing number of a set partition and its connection to the Touchard-Riordan polynomials and the Motzkin triangle.
In Section \ref{section on schur positivity for des sharing block}, we prove our main results.

\section{Background}\label{Background}

\subsection{Symmetric functions, quasi-symmetric functions and Schur positivity}\label{sec:prel_quasi}

In this subsection, we follow \cite{ABR,HPS}.
Let ${\bf x}=\{x_1,x_2,\dots\}$ be a countably infinite set of commuting variables and consider the algebra of formal power series over $\mathbb{Q}$.

A power series $f\in \mathbb{Q}[[\bf{x}]]$ is called {\em symmetric} if it has a bounded degree and it is invariant under permutation of variables.
Denote by ${\rm Sym}_n$ the vector space of symmetric functions, homogeneous of degree $n$.  Bases for ${\rm Sym}_n$  are indexed by {\em partitions $\lambda$ of $n$} (denoted $\lambda \vdash n$), or equivalently, by Young diagrams with $n$ boxes.

The celebrated Schur basis of ${\rm Sym}_n$ is essential for our needs.
Given a partition $\lambda \vdash n$, a {\em standard Young tableau (SYT) of shape $\lambda$} is obtained by filling the boxes of the corresponding Young diagram bijectively with the elements of $\{1,\dots,n\}$ so that rows and columns increase.  In a {\em semistandard Young tableau (SSYT) of shape $\lambda$}, we demand that the entries are positive integers such that rows weakly increase and columns strictly increase.
We write ${\rm SYT}(\lambda)$ or ${\rm SSYT}(\lambda)$ for the set of standard or semistandard Young tableaux of shape $\lambda$, respectively.

\medskip

In the following definition, we present the notion of a {\it descent set} of an SYT $T$:
\begin{definition}\label{Des of T}
The  {\em descent set of a standard Young tableau $T$} is defined to be:
$$
\Des( T) =\{i \mid \text {$i+1$ is in a lower row than $i$ in $T$}\}.
$$
\end{definition}
For example, if
$T=\raisebox{4mm}{\begin{smallytableau} 1&3&5 \\ 2&4&7 \\ 6 \end{smallytableau}}\ ,$
then $\Des (T)=\{1,3,5\}$.

\medskip

The {\it Schur function corresponding to a partition $\lambda \vdash n$} is: $s_{\lambda}=\sum\limits_{T \in {\rm SSYT}(\lambda)}\prod\limits_{i \in T}x_i.$
For example, the set ${\rm SSYT}(3,1)$ contains inter alia the following semistandard fillings of the shape
$\lambda=
{\begin{smallytableau} {}&{}&{} \\ {} \end{smallytableau}}
$\ :
$$
T_1 = {\begin{ytableau} {1}&{1}&{1} \\ {2} \end{ytableau}}\ , \ \
T_2= {\begin{ytableau} {1}&{2}&{2} \\ {2} \end{ytableau}}
\ , \ \  T_3={\begin{ytableau} {1}&{2}&{3} \\ {4} \end{ytableau}},
$$
which contribute to $s_{(3,1)}$ the monomials $m_1=x_1^3x_2,\ m_2=x_1x_2^3$ and $m_3=x_1x_2x_3x_4$ respectively.
It is well-known \cite[Chap. 7]{EC2} that the set $\{s_{\lambda}\}_{\lambda \vdash n}$ is a basis of ${\rm Sym}_n$.

For each $0\leq k\leq n$, define the {\it hook diagram} $\lambda_k=(n-k,1^k)$ (where $1^k$ stands for $k$ parts of size $1$), and note that: $$\{{\rm Des}(T) : T\in {\rm SYT}(\lambda_k)\}=\{D\subseteq [n-1]: |D|=k\}.$$

\medskip

A symmetric function is called {\em Schur-positive} if all the
coefficients in its expansion in the basis of Schur functions are nonnegative.
Determining whether a given symmetric function is
Schur-positive is a major problem in contemporary algebraic
combinatorics~\cite{Stanley_problems}.

\medskip

Quasi-symmetric functions extend symmetric functions. Here is the formal definition:
\begin{definition}\label{def quasi-symmetric}
    A {\em quasi-symmetric function} is  a formal power series $g$ of bounded degree,  satisfying that any two of its monomials $x_{i_1}^{n_1}\dots x_{i_k}^{n_k}$ (where $i_1<\dots<i_k$) and $x_{j_1}^{n_1}\dots x_{j_k}^{n_k}$ (where $j_1<\dots<j_k$) have the same coefficient in $g$.
\end{definition}
Clearly, every symmetric function is quasi-symmetric, but not conversely; for example, $\sum\limits_{i<j}{x_i^2 x_j}$  is quasi-symmetric but not symmetric.

The fundamental quasi-symmetric functions, which are defined now, serve as a basis for the vector space of homogeneous quasi-symmetric functions of degree $n$:
\begin{definition}\label{def fund}
For each subset $D \subseteq [n-1]$ define the {\em
fundamental quasi-symmetric function}
\[
\mathcal{F}_{n,D}({\bf x}) := \sum_{i_1\le i_2 \le \ldots \le i_n \atop {i_j <
i_{j+1} \text{ if } j \in D}} x_{i_1} x_{i_2} \cdots x_{i_n}.
\]
\end{definition}

Let $\BBB$ be a (multi-)set of combinatorial objects, equipped with a {\em descent function} ${\rm Des}: \BBB \to P([n-1])$, which associates to each
element $b\in \BBB$ a subset ${\rm Des}(b) \subseteq [n-1]$. Define the
quasi-symmetric function
\[
\Q_n(\BBB) := \sum\limits_{b\in \BBB} m(b,\BBB) \F_{n,\Des(b)},
\]
where $m(b,\BBB)$ is the multiplicity of the element $b \in \BBB$.

Gessel showed that (see \cite[Theorem 7.19.7]{EC2}):
\begin{thm}[Gessel]\label{gessel}
For every partition $\lambda \vdash n$,
$\Q_n({{\rm SYT}(\lambda)})=s_{\lambda}$.
\end{thm}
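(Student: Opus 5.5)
We need to prove Gessel's theorem: $\sum_{T\in \mathrm{SYT}(\lambda)} \F_{n,\Des(T)} = s_\lambda$. The plan is to decompose semistandard tableaux according to an underlying standard tableau, via standardization.

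\textbf{Standardization.} Given $S\in{\rm SSYT}(\lambda)$ with content drawn from the positive integers, define its standardization ${\rm std}(S)\in{\rm SYT}(\lambda)$ as follows. Among the cells containing the smallest value, relabel from left to right (southwest to northeast) by $1,2,\dots$. Then continue with the next smallest value, and so on. Equal entries of an SSYT form a horizontal strip, since no two lie in the same column. Hence reading them left to right and relabelling consecutively preserves the row- and column-increasing conditions, so ${\rm std}(S)$ is a standard Young tableau.

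\textbf{Fibres of standardization.} For $T\in{\rm SYT}(\lambda)$, I claim that the SSYT $S$ with ${\rm std}(S)=T$ are in bijection with sequences $i_1\le i_2\le\dots\le i_n$ satisfying $i_j<i_{j+1}$ whenever $j\in\Des(T)$. The bijection sends $S$ to the sequence with $i_j$ equal to the entry of $S$ in the cell where $T$ has $j$.

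\emph{Forward direction.} The sequence is weakly increasing by construction. If $j\in\Des(T)$ and $i_j=i_{j+1}$, then cells $j$ and $j+1$ carry equal values. By the left-to-right labelling rule, cell $j+1$ would then lie weakly right of cell $j$. Because equal values form a horizontal strip, it would also lie weakly above cell $j$. This contradicts $j+1$ being in a lower row, so in fact $i_j<i_{j+1}$.

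\emph{Backward direction.} Given such a sequence, place $i_j$ in the cell of $j$. Rows are weakly increasing and columns strictly increasing; for columns, note that consecutive labels inside a column always force a descent somewhere between them. One checks that standardizing the resulting $S$ recovers $T$. The point is that within a run of equal values there is no descent, so the labels in that run move weakly up and to the right. Hence the left-to-right order of those cells agrees with their numerical order in $T$.

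\textbf{Conclusion.} Summing $x_{i_1}\cdots x_{i_n}$ over the fibre of $T$ gives exactly $\F_{n,\Des(T)}$, by Definition \ref{def fund}. Summing over all $T\in{\rm SYT}(\lambda)$ then gives $\sum_{S\in{\rm SSYT}(\lambda)}x^S=s_\lambda$.

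\textbf{Main obstacle.} The delicate step is the converse check that ${\rm std}(S)=T$, which requires care with the geometry of horizontal strips. I would handle it using the following fact: if $j\notin\Des(T)$ then $j+1$ lies weakly above and hence strictly to the right of $j$ in $T$. Iterating this across each run of non-descents shows that the run forms a horizontal strip ordered left to right, matching the standardization rule.
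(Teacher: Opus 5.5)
Your proof is correct. The paper does not prove this statement; it quotes it from Stanley's \emph{Enumerative Combinatorics}, Vol.~2, Theorem~7.19.7. There it is derived from the theory of $(P,\omega)$-partitions. ${\rm SSYT}(\lambda)$ is identified with the (reverse) $(P,\omega)$-partitions of the poset of cells of $\lambda$, labeled so that weak increase is forced along rows and strict increase down columns. The fundamental lemma then splits these into classes indexed by linear extensions, i.e.\ by ${\rm SYT}(\lambda)$, and the class of $T$ contributes $\F_{n,\Des(T)}$. Your standardization map is the ``unique compatible linear extension'' map from the proof of that lemma, carried out by hand for Young diagrams, so the mechanism is the same. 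Your version is self-contained and needs no poset language. Stanley's packaging proves the analogous identity for every labeled poset at once and gives the skew version $s_{\lambda/\mu}$ immediately, though your argument also transfers verbatim to skew shapes.

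One small point of rigor concerns the fact you use in the forward direction, and in your parenthetical ``southwest to northeast'': among cells of $S$ carrying the same value, a cell further right is weakly higher. This does not follow from ``no two lie in the same column'' alone, since the cells in row 1, column 1 and row 2, column 2 share no column. It needs the shape together with the tableau conditions. Write $(r,k)$ for the cell in row $r$ and column $k$, and suppose $S(r_1,k_1)=S(r_2,k_2)=v$ with $k_1<k_2$ and $r_1<r_2$. Then $(r_2,k_1)\in\lambda$, and $v=S(r_1,k_1)<S(r_2,k_1)\le S(r_2,k_2)=v$, a contradiction. With this line added, the rest goes through as written: the forward direction, the column check in the backward direction, and your observation that a run of non-descents moves strictly to the right.
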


For another substantial example, given $A \subseteq {\mathbf S}_n$, where ${\mathbf S}_n$ is the Symmetric group on $n$ elements, define the
quasi-symmetric function $\Q_n(A) := \sum\limits_{\pi\in A} \F_{n,\Des(\pi)},$
where $\Des(\pi):=\{i:\ \pi(i)>\pi(i+1)\}$ is the ordinary descent set of $\pi$.

\subsection{The crossing number and Touchard-Riordan polynomials}
Let $n$ be a positive integer and let ${\mathbf S}et(n)$ be the set of all set partitions of $[n]$.
 Let\break $\pi=\{ B_1, B_2 , \dots , B_k\} \in {\mathbf S}et(n)$ be a set  partition of  $[n]$ into blocks $B_1,\dots,B_k$, and assume that the elements in each block $B_i$ are ordered increasingly.
The partition $\pi$ can be depicted graphically as a linear graph
made of the $n$ points
$\{1,\dots,n\}$, drawn on a (virtual) horizontal line, ordered increasingly by their labels. For every block $B_j=\{j_1,\dots,j_t\}$, we join every two points which correspond to consecutive elements, $j_k$ and $j_{k+1}$, from the same block by a semicircular arc. In the case that $j_{k+1}=j_k+1$, we join $j_k$ and $j_{k+1}$ by a straight line instead of an arc.

\begin{example}\label{exam_partition_crossing}
Let $\pi=\{\{1,3,7\},\{2,4,5,6,8\}\}$ be a partition of the set $[8]$. Then its linear graph is presented in Figure \ref{fig linear graph}.

\begin{figure}[H]
\begin{center}
\begin{tikzpicture}[scale=1]

  \foreach \x in {1,...,8} {
    \fill (\x,0) circle (2pt);
    \node[below] at (\x,0) {\x};
  }

\fill[gray] (2.32,0.67) circle (2pt);
\fill[gray] (3.5,0.77) circle (2pt);
\fill[gray] (6.59,0.71) circle (2pt);

  \def\arc#1#2#3{
    \draw (#1,0) arc[start angle=180, end angle=0, x radius={(#2-#1)/2}, y radius=#3];
  }

  \arc{1}{3}{0.7}
  \arc{3}{7}{1.2}
  \arc{2}{4}{0.9}
  \draw (4,0) -- (6,0);
  \arc{6}{8}{0.8}
\end{tikzpicture}
\end{center}
\caption{An example of a linear graph}\label{fig linear graph}
\end{figure}

\end{example}

There are several different  definitions for the notion of crossing number of a set partition.
In this work, we choose to concentrate on the following version, see e.g. \cite{PoYa}:

\begin{definition}\label{def:crossing of set partition}
Let $\pi$ be a set partition of the set $[n]$, presented as a linear graph as above.
For $r<s$, two arcs $(i_r,j_r)$ and $(i_s,j_s)$ {\em cross} each other if $i_r<i_s<j_r<j_s$. We denote by ${\rm cr}(\pi)$ the number of such crossings and call it the {\em crossing number} of the set partition $\pi$.
\end{definition}
In Example \ref{exam_partition_crossing}, the crossings are depicted as gray points, so we have ${\rm cr}(\pi)=3$.

The crossing number can be similarly defined on matchings of $[n]$, which are a subset of ${\mathbf S}et(n)$.

\begin{definition}
A partition $\pi\in {\mathbf S}et(n)$ in which each block contains at most $2$ elements is called a {\em matching}. If $n$ is even and all the blocks are of size $2$, then $\pi$ is called a {\em perfect matching}.
A matching of the set $[n]$ can be depicted as a {\em chord diagram}, which is a circle containing $n$ points, labeled by the numbers $1,\dots, n$ ordered clockwise on the circle. Each block of size $2$ is represented by a chord connecting its two elements, see Figure \ref{crossing perfect}.
\end{definition}

\begin{figure}[H]
    \centering
\hspace{-150pt}\begin{tikzpicture}
\node[draw=none,rotate=45, minimum size=3cm,regular polygon,regular polygon sides=6] (a) {};
  \fill (a.corner 1) circle[radius=2pt] node[shift={(72+35:0.4)}] {1};
  \fill (a.corner 2) circle[radius=2pt] node[shift={(144+35:0.4)}] {6};
  \fill (a.corner 3) circle[radius=2pt] node[shift={(216+35:0.4)}] {5};
  \fill (a.corner 4) circle[radius=2pt] node[shift={(288+35:0.4)}] {4};
  \fill (a.corner 5) circle[radius=2pt] node[shift={(35:0.4)}] {3};
    \fill (a.corner 6) circle[radius=2pt] node[shift={(35:0.4)}] {2};

\draw[black]     (a.corner 1) -- (a.corner 3);
\draw[black]     (a.corner 2) -- (a.corner 5);
\draw[black]     (a.corner 4) -- (a.corner 6);

\draw (0,0) circle (1.5cm);
\end{tikzpicture}

\vspace{-120pt}\hspace{150pt}\begin{tikzpicture}
\node[draw=none,rotate=45, minimum size=3cm,regular polygon,regular polygon sides=6] (a) {};
  \fill (a.corner 1) circle[radius=2pt] node[shift={(72+35:0.4)}] {1};
  \fill (a.corner 2) circle[radius=2pt] node[shift={(144+35:0.4)}] {6};
  \fill (a.corner 3) circle[radius=2pt] node[shift={(216+35:0.4)}] {5};
  \fill (a.corner 4) circle[radius=2pt] node[shift={(288+35:0.4)}] {4};
  \fill (a.corner 5) circle[radius=2pt] node[shift={(35:0.4)}] {3};
    \fill (a.corner 6) circle[radius=2pt] node[shift={(35:0.4)}] {2};

\draw[black]     (a.corner 1) -- (a.corner 3);
\draw[black]     (a.corner 2) -- (a.corner 5);

\draw (0,0) circle (1.5cm);
\end{tikzpicture}

\caption{Left: a perfect matching, right: a non-perfect matching}
\label{crossing perfect}
\end{figure}

\begin{definition}
The {\em crossing number of a matching} $M$, denoted ${\rm cr}(M)$, is the crossing number of its corresponding partition as defined above, which can also be seen as the number of pairs of crossing chords in the chord diagram.
\end{definition}
The generating function of all perfect matchings on $2m$ points with respect to the crossing number is given by Touchard-Riordan polynomials, which are defined as follows:
$$\sum\limits_{M}q^{{\rm cr}(M)}=T_{2m}(q)=\frac{1}{(1-q)^m}\sum\limits_{i=0}^m(-1)^i\left( \binom{2m}{m-i}-\binom{2m}{m-i-1}\right)q^{\binom{i+1}{2}},$$
where $M$ runs through all perfect matchings on $2m$ points.
Since isolated points have no effect on the crossing number, we can easily obtain the generating function for all the matchings on $n$ points (not necessarily perfect), as follows.
Let $T_{n,j}(q)$ be the generating function for the number of matchings on $n$ points with $j$ chords with respect to the crossing number.
Then we clearly  have $p= n -2j$
unmatched points, and thus we have:
$T_{n,j}(q)=\binom{n}{2j}T_{2j}(q)$.

\subsection{Non-crossing matchings and Motzkin numbers}
\label{subsec:motzkin numbers}
Matchings $M$ satisfying ${\rm cr}(M)=0$, i.e. having no crossings, deserve special attention since they are counted by the {\it Motzkin numbers}, as originally introduced by Motzkin \cite{Motzkin} (these numbers have other various combinatorial interpretations, see \cite{DS} and Sequence A001006 in OEIS \cite{OEIS}).
The recursion defining Motzkin numbers is:
$$M_n= M_{n-1}+ \sum\limits_{i=0}^{n-2}M_i M_{n-2-i} = \frac{2n+1}{n+2}M_{n-1} + \frac{3n-3}{n+2}M_{n-2},$$
and the first Motzkin numbers are:
$M_0=1,\ M_1=1,\ 2,\ 4,\ 9,\ 21,\ 51,\ 127,\ 323.$

\begin{example}\label{Example $M_4$}
As $M_4=9$, there are $9$ ways to draw non-crossing chords between $4$ points on the circle, which are drawn in Figure \ref{fig M4=9}.

\begin{figure}[H]
\begin{center}

    \begin{tikzpicture}[
Circ/.style={draw,shape=circle,minimum size=15mm, node contents={}}
                        ]
\node (C1) [Circ];
\fill[black]    (C1.north west) circle (2pt)
                (C1.south west) circle (2pt)
                (C1.south east) circle (2pt)
                (C1.north east) circle (2pt);
\node[  shift={(100+35:0.4)},minimum size=5mm,
         ] at (C1.north west){\rm 1};
\node[  shift={(370+35:0.4)},minimum size=5mm,
         ] at (C1.north east){\rm 2};
\node[  shift={(170+35:0.4)},minimum size=5mm,
         ] at (C1.south west){\rm 4};
\node[  shift={(280+35:0.4)},minimum size=5mm,
         ] at (C1.south east){\rm 3};

\node (C2) [Circ,right=11mm of C1];
\draw[black]     (C2.north west) -- (C2.north east);
\fill[black]    (C2.north west) circle (2pt)
                (C2.south west) circle (2pt)
                (C2.south east) circle (2pt)
                (C2.north east) circle (2pt);
\node[  shift={(100+35:0.4)},minimum size=5mm,
         ] at (C2.north west){\rm 1};
\node[  shift={(370+35:0.4)},minimum size=5mm,
         ] at (C2.north east){\rm 2};
\node[  shift={(170+35:0.4)},minimum size=5mm,
         ] at (C2.south west){\rm 4};
\node[  shift={(280+35:0.4)},minimum size=5mm,
         ] at (C2.south east){\rm 3};
\node (C3) [Circ,right=11mm of C2];
\draw[black]     (C3.south east) -- (C3.north east);
\fill[black]    (C3.north west) circle (2pt)
                (C3.south west) circle (2pt)
                (C3.south east) circle (2pt)
                (C3.north east) circle (2pt);
\node[  shift={(100+35:0.4)},minimum size=5mm,
         ] at (C3.north west){\rm 1};
\node[  shift={(370+35:0.4)},minimum size=5mm,
         ] at (C3.north east){\rm 2};
\node[  shift={(170+35:0.4)},minimum size=5mm,
         ] at (C3.south west){\rm 4};
\node[  shift={(280+35:0.4)},minimum size=5mm,
         ] at (C3.south east){\rm 3};

\node (C4) [Circ,right=11mm of C3];
\draw[black]     (C4.south west) -- (C4.south east);
\fill[black]    (C4.north west) circle (2pt)
                (C4.south west) circle (2pt)
                (C4.south east) circle (2pt)
                (C4.north east) circle (2pt);
\node[  shift={(100+35:0.4)},minimum size=5mm,
         ] at (C4.north west){\rm 1};
\node[  shift={(370+35:0.4)},minimum size=5mm,
         ] at (C4.north east){\rm 2};
\node[  shift={(170+35:0.4)},minimum size=5mm,
         ] at (C4.south west){\rm 4};
\node[  shift={(280+35:0.4)},minimum size=5mm,
         ] at (C4.south east){\rm 3};
\node (C5) [Circ,right=11mm of C4];
\draw[black]     (C5.north west) -- (C5.south west);
\fill[black]    (C5.north west) circle (2pt)
                (C5.south west) circle (2pt)
                (C5.south east) circle (2pt)
                (C5.north east) circle (2pt);
\node[  shift={(100+35:0.4)},minimum size=5mm,
         ] at (C5.north west){\rm 1};
\node[  shift={(370+35:0.4)},minimum size=5mm,
         ] at (C5.north east){\rm 2};
\node[  shift={(170+35:0.4)},minimum size=5mm,
         ] at (C5.south west){\rm 4};
\node[  shift={(280+35:0.4)},minimum size=5mm,
         ] at (C5.south east){\rm 3};
\node (C6) [Circ,below=11mm of C1];
\draw[black]     (C6.south west) -- (C6.north east);
\fill[black]    (C6.north west) circle (2pt)
                (C6.south west) circle (2pt)
                (C6.south east) circle (2pt)
                (C6.north east) circle (2pt);
\node[  shift={(100+35:0.4)},minimum size=5mm,
         ] at (C6.north west){\rm 1};
\node[  shift={(370+35:0.4)},minimum size=5mm,
         ] at (C6.north east){\rm 2};
\node[  shift={(170+35:0.4)},minimum size=5mm,
         ] at (C6.south west){\rm 4};
\node[  shift={(280+35:0.4)},minimum size=5mm,
         ] at (C6.south east){\rm 3};
\node (C7) [Circ,right=11mm of C6];
\draw[black]     (C7.north west) -- (C7.south east);
\fill[black]    (C7.north west) circle (2pt)
                (C7.south west) circle (2pt)
                (C7.south east) circle (2pt)
                (C7.north east) circle (2pt);
\node[  shift={(100+35:0.4)},minimum size=5mm,
         ] at (C7.north west){\rm 1};
\node[  shift={(370+35:0.4)},minimum size=5mm,
         ] at (C7.north east){\rm 2};
\node[  shift={(170+35:0.4)},minimum size=5mm,
         ] at (C7.south west){\rm 4};
\node[  shift={(280+35:0.4)},minimum size=5mm,
         ] at (C7.south east){\rm 3};

\node (C8) [Circ,right=11mm of C7];
\draw[black]     (C8.north west) -- (C8.south west);
\draw[black]      (C8.north east) -- (C8.south east);
\fill[black]    (C8.north west) circle (2pt)
                (C8.south west) circle (2pt)
                (C8.south east) circle (2pt)
                (C8.north east) circle (2pt);
\node[  shift={(100+35:0.4)},minimum size=5mm,
         ] at (C8.north west){\rm 1};
\node[  shift={(370+35:0.4)},minimum size=5mm,
         ] at (C8.north east){\rm 2};
\node[  shift={(170+35:0.4)},minimum size=5mm,
         ] at (C8.south west){\rm 4};
\node[  shift={(280+35:0.4)},minimum size=5mm,
         ] at (C8.south east){\rm 3};
\node (C9) [Circ,right=11mm of C8];
\draw[black]     (C9.south east) -- (C9.south west);
\draw[black]      (C9.north east) -- (C9.north west);
\fill[black]    (C9.north west) circle (2pt)
                (C9.south west) circle (2pt)
                (C9.south east) circle (2pt)
                (C9.north east) circle (2pt);
\node[  shift={(100+35:0.4)},minimum size=5mm,
         ] at (C9.north west){\rm 1};
\node[  shift={(370+35:0.4)},minimum size=5mm,
         ] at (C9.north east){\rm 2};
\node[  shift={(170+35:0.4)},minimum size=5mm,
         ] at (C9.south west){\rm 4};
\node[  shift={(280+35:0.4)},minimum size=5mm,
         ] at (C9.south east){\rm 3};
\end{tikzpicture}
\end{center}
\caption{The $9$ ways to draw non-crossing chords between $4$ points on the circle}\label{fig M4=9}
\end{figure}

\end{example}

\medskip

There is a refinement of Motzkin numbers, called {\it Motzkin triangle} (see Sequence A055151 in the OEIS \cite{OEIS}), whose first 9 lines are presented in Table \ref{tab:A055151}.

\begin{table}[H]
$$\begin{NiceArray}{|w{c}{0.4cm}|c;c;c;c;c|}
\hline
\Gape[0.35cm][0.35cm]{\diagbox{\,\,n}{k\,}} & 0 & 1 & 2 & 3 & 4\\
\hline
0  &  1  & &&& \\
1  &  1  & &&& \\
2  &  1  & 1 &&& \\
3  &  1 &  3 &&&\\
4  &  1 & 6 & 2 &&\\
5  &  1 & 10 & 10 &&\\
6  &  1 & 15 & 30 & 5 & \\
7  &  1 & 21 & 70 & 35& \\
8  &  1 & 28 & 140 & 140 & 14\\
\hline
\end{NiceArray}$$

\caption{Table form of sequence A055151}
\label{tab:A055151}
\end{table}

\medskip

The element $M_{n,k}$ in the Motzkin triangle  counts the number of ways to draw $k$ non-crossing chords between $n$ points on the circle. For example, the fifth line  of the triangle ``1\ 6\ 2'' counts one diagram of $4$ points on a circle without chords, $6$ diagrams  with one chord and $2$ diagrams with 2 chords (see Example \ref{Example $M_4$}).

\begin{remark}
Note that the combinatorial explicit connection between Touchard-Riordan polynomials and Motzkin numbers is given by $T_{2n}(0)=M_{2n}$. We use this observation in Corollary \ref{specializations}(3)-(4) below.
We could not find an explicit appearance in the literature of this connection.
\end{remark}

\section{Schur-positivity of set partitions with respect to consecutive elements sharing a block}\label{section on schur positivity for des sharing block}
Let ${\mathbf S}et(n,\ell)$ be the set of all set partitions of the set $[n]$ with $\ell$ blocks (and respectively, we denote by $\mathcal{NC}(n,\ell)$ the non-crossing partitions of $[n]$ having $\ell$ blocks). For $\pi \in {\mathbf S}et(n,\ell)$, define: $${\rm Des}(\pi)=\{i \in [n-1] \mid i \mbox{ and } i+1 \mbox { are in the same block} \}.$$
Note that this set function is not sparse in the sense of Definition 1.7 of Marmor \cite{Marmor}.

\medskip

Our main result addresses the Schur-positivity of set partitions having $\ell$ blocks with respect to the parameter ${\rm  Des}$ we have just defined:

\begin{thm}
\begin{equation}\label{q touchard}
\sum\limits_{\pi \in {\mathbf S}et(n,\ell)} q^{{\rm cr} (\pi)} \mathcal{F}_{n,{\rm Des}(\pi)} =\sum\limits_{k=0}^{n-1} T_{n-1-k,n-k-\ell}(q) s_{(n-k,1^k)}
\end{equation}
\end{thm}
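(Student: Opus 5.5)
The plan is to reduce the identity to a statement about a single ``reduced'' family of set partitions, and then to use Gessel's Theorem \ref{gessel} for hooks. Since $\{\Des(T): T\in {\rm SYT}(n-k,1^k)\}$ is exactly the family of $k$-subsets of $[n-1]$, we have $s_{(n-k,1^k)}=\sum_{|D|=k}\F_{n,D}$. It therefore suffices to show that for every fixed $D\subseteq[n-1]$ with $|D|=k$,
$$\sum_{\pi\in {\mathbf S}et(n,\ell),\ \Des(\pi)=D} q^{{\rm cr}(\pi)} = T_{n-1-k,\,n-k-\ell}(q).$$
The key point is that the left-hand side depends only on $k=|D|$.

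First, I will construct a compression map. Given $\pi$ with $\Des(\pi)=D$, the maximal intervals of $[n]$ that are runs of consecutive elements joined through $D$ lie inside single blocks. I contract each such run to one point and relabel order-preservingly, which gives a partition $\bar\pi$ of $[n-k]$ with $\ell$ blocks in which no $i,i+1$ share a block. Conversely, $D$ tells us how to re-expand. So for fixed $D$ this is a bijection onto $\mathcal{P}(n-k,\ell)$, the set of partitions of $[n-k]$ into $\ell$ blocks with no two consecutive integers in a common block.

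The crossings are preserved. A straight arc $(i,i+1)$ can never satisfy $i_r<i_s<j_r<j_s$ with another arc, so it contributes no crossings. Every other arc of $\pi$ corresponds to a unique arc of $\bar\pi$ under a strictly increasing relabeling of endpoints, and the crossing condition is order-theoretic. Hence ${\rm cr}(\pi)={\rm cr}(\bar\pi)$, and the left side above equals $\sum_{\sigma\in\mathcal{P}(n-k,\ell)}q^{{\rm cr}(\sigma)}$, independently of $D$. Summing over $D$ and collecting by $k$ then yields the form of \eqref{q touchard}.

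The remaining step, which I expect to be the main obstacle, is to identify $\sum_{\sigma\in\mathcal{P}(m,\ell)}q^{{\rm cr}(\sigma)}$, with $m=n-k$, with $T_{m-1,m-\ell}(q)=\binom{m-1}{2(m-\ell)}T_{2(m-\ell)}(q)$. Note that $\sigma$ has exactly $m-\ell$ arcs, matching the number of chords. I would first try a bijection $\mathcal{P}(m,\ell)\to$ matchings of $[m-1]$ with $m-\ell$ chords that preserves crossings. One candidate is to shift left endpoints or to use a Chen--Deng--Du--Stanley--Yan type vacillating-tableau encoding, which is known to control crossings of partitions. Failing that, I would compare generating functions via a continued fraction or recursion in $m$.

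Before investing in this, I would check small cases. For $m=3$, $\ell=2$ both sides equal $1$. However, for $m=4$, $\ell=2$ the only element of $\mathcal{P}(4,2)$ is $\{1,3\},\{2,4\}$, which contributes $q$, while $T_{3,2}(q)=0$. So the indexing of $T$ in \eqref{q touchard}, or the crossing convention, may need adjustment, and I would settle the correct normalization of this last identification on such examples first.
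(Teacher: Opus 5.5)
Your reduction is correct and has the same skeleton as the paper's proof. The paper's inflation along $\varphi$ (read off from $\Des(T)$) is the inverse of your compression, and the paper likewise relies on $s_{(n-k,1^k)}=\sum_{|D|=k}\F_{n,D}$. One small point: the compression is only weakly increasing on arc endpoints. A single run can contain the right end of one long arc and the left end of the next long arc of the same block. Such a pair is non-crossing both before and after compression, so ${\rm cr}(\pi)={\rm cr}(\bar\pi)$ still holds. Because the $\F_{n,D}$ form a basis, your reduction is in fact an equivalence: the theorem holds if and only if $\sum_{\sigma\in\mathcal{P}(m,\ell)}q^{{\rm cr}(\sigma)}=T_{m-1,m-\ell}(q)$ for all $m,\ell$.

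So the step you left open cannot be filled: your small case is a genuine counterexample to the statement, not a normalization issue. For $n=4$, $\ell=2$ the left-hand side equals $q\,s_{(4)}+s_{(3,1)}+s_{(2,1,1)}$; the term $q\,\F_{4,\emptyset}=q\,s_{(4)}$ comes from $\{1,3\},\{2,4\}$. The right-hand side is only $s_{(3,1)}+s_{(2,1,1)}$.

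The paper's proof breaks exactly at its $k=0$ step, which is an endpoint-shifting map of the kind you list as a first candidate: a chord $\{i,j\}$, $i<j$, on $[n-1]$ becomes the arc $(i,j+1)$. This map is injective and crossing-preserving, but the asserted inverse does not exist. The partition $\{1,3\},\{2,4\}$ would have to come from the chords $\{1,2\},\{2,3\}$, which share the point $2$. In general the image misses every partition in which some $x$ is the left end of one arc $(x,b)$ while $x+1$ is the right end of another arc $(a,x+1)$.

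Neither a reindexing of $T$ nor a different crossing convention can repair this. Already at $q=1$ one has $|\mathcal{P}(m,\ell)|=S(m-1,\ell-1)$, a Stirling number of the second kind, since both sides satisfy $a(m,\ell)=a(m-1,\ell-1)+(\ell-1)a(m-1,\ell)$. For example $|\mathcal{P}(5,3)|=7$, which is not of the form $\binom{a}{2b}(2b-1)!!$. In that case the true polynomial is $2+5q$, against $T_{4,2}(q)=2+q$.

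What survives is the constant term in $q$. A missed configuration $(a,x+1),(x,b)$ satisfies $a<x<x+1<b$, so it is itself a crossing. Hence the paper's map restricts to a bijection between non-crossing matchings of $[m-1]$ with $m-\ell$ chords and the non-crossing members of $\mathcal{P}(m,\ell)$. Consequently Corollary~\ref{specializations}(3)--(4) are correct, while (1)--(2) are false; for instance, at $n=4$ they give $14$ terms instead of $|{\mathbf S}et(4)|=15$.

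The correct version of the theorem is exactly what your reduction already proves. The coefficient of $s_{(n-k,1^k)}$ is $\sum_{\sigma\in\mathcal{P}(n-k,\ell)}q^{{\rm cr}(\sigma)}$. This equals $S(n-k-1,\ell-1)$ at $q=1$ and $M_{n-k-1,n-k-\ell}$ at $q=0$.
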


\begin{proof}
Let $\mathcal{D}_{n,j}$ be the set of chord diagrams having $n$ points and $j$ chords, and let ${\mathbf S}et(n,\ell,k)$ be the set of set partitions of $[n]$ with $\ell$ blocks and $k$ descents.

One can consider both sides of Equation (\ref{q touchard}) as polynomials in the variable $q$ whose coefficients are in the ring of quasi-symmetric functions. To prove the equality, we have to show that the corresponding coefficients are equal. We do that with the aid of Gessel's Theorem \ref{gessel}, by invoking a bijection
$${\rm SYT}(n-k,1^k)\times \mathcal{D}_{n-1-k,n-k -\ell} \to {\mathbf S}et(n,\ell,k),$$ which preserves the crossing number.

We start with the case $k=0$, i.e., we present a bijection from the set of pairs $$P=\left\{\left(\raisebox{-.4ex}{\begin{smallytableau} {1}&{2}&\cdots& n  \end{smallytableau}}\ , \ C\right) \mid C \in \mathcal{D}_{n-1,n-\ell}\right\}$$ to ${\mathbf S}et(n,\ell,0)$
which, as defined above, is the set of set partitions of $[n]$ with $\ell$ blocks and no descents. Let $C\in \mathcal{D}_{n-1,n-\ell}$, i.e., a chord diagram on $n-1$ points with $n-\ell$  chords. The bijection has two steps: First, for each chord $\{i,j\}$ of $C$ with $i<j$, we create the block $\{i,j+1\}$. Next, we unify every two intersecting blocks. The points unmatched by chords contribute singletons to the resulting set partition.

Since the crossing number depends solely on the arcs that connect consecutive elements in the blocks of the partition, it is easy to see that the resulting set partition, consisting of $\ell$ blocks, has the same number of crossings as in the chord diagram.

In the opposite direction, when we are given a set partition with no descents, it is straightforward to recover the chord diagram which is mapped to it.

For example, for the chord diagram in the left part of Figure \ref{example of bijection_cross}, which is an element of $\mathcal{D}_{7,3}$, we first construct the blocks $\{1,4\},\{2,6\}, \{4,8\}$, and then we send it to the set partition $$\{\{1,4,8\},\{2,6\},\{3\},\{5\}, \{7\}\}\in {\mathbf S}et (8,5,0).$$
Note that there are two crossings in both the chord diagram and the corresponding set partition.

\begin{figure}[H]
\hspace{-10pt}
\vspace{-80pt}
\begin{tikzpicture}
\node[draw=none,rotate=45, minimum size=3cm,regular polygon,regular polygon sides=7] (a) {};
  \fill (a.corner 1) circle[radius=2pt] node[shift={(72+35:0.4)}] {1};
  \fill (a.corner 2) circle[radius=2pt] node[shift={(144+35:0.4)}] {7};
  \fill (a.corner 3) circle[radius=2pt] node[shift={(216+35:0.4)}] {6};
  \fill (a.corner 4) circle[radius=2pt] node[shift={(288+35:0.4)}] {5};
  \fill (a.corner 5) circle[radius=2pt] node[shift={(35:0.4)}] {4};
  \fill (a.corner 6) circle[radius=2pt] node[shift={(35:0.4)}] {3};
  \fill (a.corner 7) circle[radius=2pt] node[shift={(35:0.4)}] {2};

\draw[black]     (a.corner 1) -- (a.corner 6);
\draw[black]     (a.corner 7) -- (a.corner 4);
\draw[black]     (a.corner 5) -- (a.corner 2);
\draw (0,0) circle (1.5cm);
\end{tikzpicture}

\hspace{150pt}\begin{tikzpicture}[scale=1]
  \foreach \x in {1,...,8} {
    \fill (\x,0) circle (2pt);
    \node[below] at (\x,0) {\x};
  }

\fill[gray] (2.72,0.69) circle (2pt);
\fill[gray] (5.37,0.66) circle (2pt);

  \def\arc#1#2#3{
    \draw (#1,0) arc[start angle=180, end angle=0, x radius={(#2-#1)/2}, y radius=#3];
  }

  \arc{1}{4}{0.7}
  \arc{4}{8}{0.7}
  \arc{2}{6}{0.9}
\end{tikzpicture}

\vspace{30pt}
\caption{Left: A chord diagram. Right: The linear graph of the resulting set partition.}
\label{example of bijection_cross}
\end{figure}

We turn now to the case $k \geq 1$. Clearly, every standard Young tableau of shape $(n-k,1^k)$ has exactly $k$ descents. Given a pair $(T,C)\in {\rm SYT}(n-k,1^k) \times \mathcal{D}_{n-1-k,n-k -\ell}$, we first define a function $\varphi:[n]\rightarrow [n-k]$ by:
$$\varphi(j)=j-|\{i\in {\rm Des}(T) \mid i<j\}|.$$
Next, we apply the procedure described above for the case $k=0$ for the chord diagram $C$ having $n-1-k$ points on a circle with $n-k-\ell$ chords, resulting in a partition of the set $[n-k]$, having $\ell$ blocks. Finally, we obtain a set partition of $[n]$ by replacing each element of $[n-k]$ by the elements in its preimage by $\varphi$, see Example \ref{example bijection k descents} below.

\medskip

Note that in the last step, while inflating the set $[n-k]$ to the set $[n]$ using $\varphi$, the crossing number of the set partition does not change, as we count only crossings of arcs, while the inflation contributes only straight segments.

\medskip

In the opposite direction, note that when we are given a set partition, one can easily recover the descent set, and the associated function $\varphi$. Then, it is straightforward to recover also the chord diagram which is mapped to it.
\end{proof}

\begin{example}\label{example bijection k descents}
Given a standard Young tableau $T$, with ${\rm Des}(T)=\{1,2,5,7,9\}\subseteq [13]$ and a chord diagram $C$ as in Figure \ref{example of bijection_cross} above, we first define
$\varphi: [13] \to [8]$, using the tableau $T$, as follows: $$\varphi(1)=\varphi(2)=\varphi(3)=1, \ \varphi(4)=2,\ \varphi(5)=\varphi(6)=3,\ \varphi(7)=\varphi(8)=4,$$
$$\varphi(9)=\varphi(10)=5,\ \varphi(11)=6, \ \varphi(12)=7 \mbox{ and } \varphi(13)=8.$$
Next, we use the chord diagram $C$ to obtain the set partition, as computed in the case $k=0$:
$$\{\{1,4,8\},\{2,6\},\{3\},\{5\}, \{7\}\}.$$
Finally, we replace each element of the set $[8]$ by the elements in its preimage by $\varphi$, to obtain the set partition:
$$\{\{\underbrace{1,2,3}_{\varphi^{-1}(1)},\underbrace{7,8}_{\varphi^{-1}(4)},\underbrace{13}_{\varphi^{-1}(8)}\},\{\underbrace{4}_{\varphi^{-1}(2)},\underbrace{11}_{\varphi^{-1}(6)}\},\{\underbrace{5,6}_{\varphi^{-1}(3)}\},\{\underbrace{9,10}_{\varphi^{-1}(5)}\},
\{\underbrace{12}_{\varphi^{-1}(7)}\}\}\in {\mathbf S}et(13,5,5).$$
\end{example}

\medskip

Plugging $q=1$ in Equation (\ref{q touchard}) induces an explicit Schur-positivity result for the set of set partitions, while substituting $q=0$ reduces us to a Schur-positivity result for non-crossing partitions, where elements from the Motzkin triangle replace the  coefficients of the Touchard-Riordan polynomials:

\begin{cor}\label{specializations}\ \\

\begin{enumerate}
\item $\mathcal{Q}_n({\mathbf S}et(n,\ell))=\sum\limits_{k=0}^{n-1} T_{n-1-k,n-\ell-k}(1) s_{(n-k,1^k)}$,
\item $\mathcal{Q}_n({\mathbf S}et(n))=\sum\limits_{k=0}^{n-1} \sum\limits_{\ell=1}^{n-k} T_{n-1-k,n-\ell-k}(1) s_{(n-k,1^k)}$,
\item $\mathcal{Q}_n({\mathcal{ NC}}(n,\ell))=\sum\limits_{k=0}^{n-1} M_{n-1-k,n-\ell-k} s_{(n-k,1^k)}$,
\item $\mathcal{Q}_n({\mathcal{NC}}(n))=\sum\limits_{k=0}^{n-1} M_{n-k-1} s_{(n-k,1^k)}$.
\end{enumerate}

\end{cor}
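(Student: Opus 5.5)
Each of the four items is a specialization of Equation~(\ref{q touchard}), obtained either by substituting a value of $q$ or by summing over $\ell$.

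\emph{Items (1) and (2).} Set $q=1$ in Equation~(\ref{q touchard}). The left-hand side becomes $\sum_{\pi\in{\mathbf S}et(n,\ell)}\F_{n,\Des(\pi)}=\Q_n({\mathbf S}et(n,\ell))$, which is item (1) directly. For item (2), sum item (1) over $\ell$, using ${\mathbf S}et(n)=\bigsqcup_{\ell=1}^{n}{\mathbf S}et(n,\ell)$, and interchange the two finite sums. For fixed $k$, the coefficient $T_{n-1-k,n-k-\ell}(1)$ vanishes unless $0\le n-k-\ell$, i.e.\ $\ell\le n-k$. So the inner range can be taken as $1\le\ell\le n-k$, which gives (2). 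The sum is finite, so the interchange needs no justification.

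\emph{Items (3) and (4).} Set $q=0$. Since ${\rm cr}(\pi)\ge 0$, the left-hand side keeps exactly the partitions with ${\rm cr}(\pi)=0$. These are the non-crossing partitions: for the linear-graph crossing of Definition~\ref{def:crossing of set partition}, a partition has no crossing arcs iff it is non-crossing in the usual sense. I will state this as the identification $\{\pi:{\rm cr}(\pi)=0\}=\mathcal{NC}(n,\ell)$, since it is implicit in the paper's terminology.

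On the right-hand side, $T_{m,j}(0)$ counts matchings on $m$ points with $j$ chords and no crossings. I read the chord diagram on a circle, as in the bijection of the theorem, and then $T_{m,j}(0)=M_{m,j}$ by the definition of the Motzkin triangle. This gives (3).

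Item (4) is the main step. Sum (3) over $\ell$; as before, only $1\le \ell\le n-k$ contributes. Substituting $j=n-k-\ell$, the range $1\le\ell\le n-k$ becomes $0\le j\le n-k-1$, so
\[
\sum_{\ell=1}^{n-k} M_{n-1-k,\,n-k-\ell}=\sum_{j\ge 0}M_{n-1-k,\,j}=M_{n-1-k}.
\]
The last equality holds because $M_m$ counts all non-crossing chord diagrams on $m$ points, and $M_{m,j}$ refines this count by the number $j$ of chords. The only delicate points are this reindexing and the identification of ${\rm cr}=0$ partitions with $\mathcal{NC}$.
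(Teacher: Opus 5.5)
Your argument is the paper's own: specialize Equation~(\ref{q touchard}) at $q=1$ and $q=0$, and sum over $\ell$. As a deduction from that equation it is fine, and your identifications $\{\pi:{\rm cr}(\pi)=0\}=\mathcal{NC}(n,\ell)$ and $T_{m,j}(0)=M_{m,j}$ are correct. The problem is that the equation itself fails for $q\neq 0$. So items (1) and (2) are false, and no specialization argument can prove them.

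Take $n=4$, $\ell=2$. The only partition in ${\mathbf S}et(4,2)$ with empty descent set is $\{\{1,3\},\{2,4\}\}$, and it has one crossing. Compare coefficients of $\F_{4,\emptyset}$, which occurs only in $s_{(4)}$. The left side of (\ref{q touchard}) gives $q$, while the right side gives $T_{3,2}(q)=0$, since no matching on three points has two chords.
\begin{itemize}
\item For item (1): $\Q_4({\mathbf S}et(4,2))=s_{(4)}+s_{(3,1)}+s_{(2,1,1)}$, whereas the formula predicts $s_{(3,1)}+s_{(2,1,1)}$.
\item For item (2): $\Q_4({\mathbf S}et(4))$ has $s_{(4)}$-coefficient $5$, coming from the descent-free partitions $1|2|3|4$, $13|2|4$, $14|2|3$, $24|1|3$, $13|24$. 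The formula gives $\sum_j T_{3,j}(1)=4$.
\end{itemize}
The theorem's map $\{i,j\}\mapsto\{i,j+1\}$ is injective and crossing-preserving, but not onto. For instance, $\{\{1,3\},\{2,4\}\}$ would need the chords $\{1,2\}$ and $\{2,3\}$, which share the point $2$. In general, a descent-free partition is missed exactly when it contains arcs $(a',a+1)$ and $(a,b)$. Descent-free partitions of $[m]$ into $\ell$ blocks are counted by the Stirling number $S(m-1,\ell-1)$. Hence the correct coefficients are $S(n-1-k,\ell-1)$ in (1) and the Bell number $B_{n-1-k}$ in (2).

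Items (3) and (4) are true, but specializing an identity that does not hold proves nothing. You need the $q=0$ case of the theorem on its own.
\begin{itemize}
\item \emph{Reduction to $k=0$.} The contraction $\varphi$ has fibres that are intervals inside blocks, so it preserves the number of blocks and the crossing number. Hence, for each $D$ with $|D|=k$, the partitions in $\mathcal{NC}(n,\ell)$ with descent set $D$ correspond to non-crossing descent-free partitions of $[n-k]$ with $\ell$ blocks. Since $\sum_{|D|=k}\F_{n,D}=s_{(n-k,1^k)}$ by Gessel's theorem, only the case $k=0$ remains.
\item \emph{The case $k=0$.} Show that $\{i,j\}\mapsto\{i,j+1\}$ maps non-crossing matchings on $[m-1]$ with $m-\ell$ chords bijectively onto non-crossing descent-free partitions of $[m]$ with $\ell$ blocks. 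Surjectivity holds here because of the obstruction above. Descent-freeness forces $a'<a<a+1<b$ for the arcs $(a',a+1)$ and $(a,b)$, so they cross, and such a pair cannot occur in a non-crossing partition.
\end{itemize}
With this lemma, your reindexing $\sum_{\ell=1}^{n-k}M_{n-1-k,\,n-k-\ell}=M_{n-1-k}$ finishes (4) as you wrote it.
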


\medskip

\section*{Acknowledgments}
We would like to express our gratitude to Yuval Roichman for fruitful discussions.

\normalsize
\bibliographystyle{eptcs}
\bibliography{schur-positivity-gascom2026}

\end{document}